\newtheorem{lemma}{Lemma}
\newtheorem{example}{Example}
\title{Using the distribution of cells by dimension in a cylindrical algebraic decomposition}
\author{David Wilson, Matthew England, Russell Bradford \& James H. Davenport  \\
Department of Computer Science, University of Bath, Bath, BA2 7AY, UK \\
E-mail: \texttt{\{D.J.Wilson, M.England, R.J.Bradford, J.H.Davenport\}@bath.ac.uk}}
\begin{document}

\maketitle

\begin{abstract}
We investigate the distribution of cells by dimension in cylindrical algebraic decompositions (CADs).  We find that they follow a standard distribution which seems largely independent of the underlying problem or CAD algorithm used.  Rather, the distribution is inherent to the cylindrical structure and determined mostly by the number of variables.

This insight is then combined with an algorithm that produces only full-dimensional cells to give an accurate method of predicting the number of cells in a complete CAD.  Since constructing only full-dimensional cells is relatively inexpensive (involving no costly algebraic number calculations) this leads to heuristics for helping with various questions of problem formulation for CAD, such as choosing an optimal variable ordering.  Our experiments demonstrate that this approach can be highly effective.
\end{abstract}

\section{Introduction}
\label{SEC:Intro}

\subsection{Background on CAD}

A \textbf{cylindrical algebraic decomposition} (CAD) is:
\begin{itemize}
\item a \textbf{decomposition} of $\mathbb{R}^n$, meaning a collection of cells which do not intersect and whose union is $\mathbb{R}^n$;
\item \textbf{cylindrical}, meaning the projections of any pair of cells with respect to a given variable ordering are either equal or disjoint;
\item \textbf{(semi)-algebraic}, meaning each cell can be described using a finite sequence of polynomial relations.
\end{itemize}
The first algorithm to produce CADs was introduced by Collins \cite{ACM84I}.  Here we start with a projection phase to derive a set of polynomials relative to the input, then CADs are built incrementally by dimension according to the zeros of those polynomials (a process known as lifting).  

Each cell is represented by a \textbf{cell-index}: an $n$-tuple of integers defining its position in the CAD.  An even integer is referring to a variable taking the value of one of the (ordered) real roots of the projection polynomials and an odd integer means that a variable is within an interval between two of these.  In addition each cell would usually be accompanied by a sample point used in the construction.  

We would normally compute a CAD to solve an underlying problem.  Most notably, it can be a tool for quantifier elimination (QE) over the reals.  Here we must build a CAD relative to a quantified formula such that the Boolean value of that formula is invariant (true or false) in each cell.  Then an equivalent quantifier-free formula may be computed from the semi-algebraic descriptions of the true cells.
CAD has been applied elsewhere, including problems in parametric optimisation \cite{FPM05}, epidemic modelling \cite{BENW06}, theorem proving \cite{Paulson2012}, motion planning \cite{WDEB13} and reasoning with multi-valued functions and their branch cuts \cite{DBEW12}.

The original, and most common CAD algorithm gives output that is \textbf{sign-invariant} with respect to a set of polynomials.  This means that each polynomial in the input has constant sign on each cell of the CAD created.   However, a CAD can be produced more efficiently if we work closer to the underlying problem, for example, by building CADs invariant with respect to the truth of formulae \cite{BDEMW13}, or making use of the structure of the quantifiers \cite{CH91}.  Other important advances in CAD theory include the use of certified numerics \cite{Strzebonski2006, IYAY09} and an alternative approach to Collins' algorithm using the theory of regular chains and triangular decomposition \cite{CMXY09, BCDEMW14}.  

In this paper we are more concerned with a CAD as the mathematical object satisfying the definition above (rather than the particular algorithm which produces one).

\subsection{Contribution}

Each cell in a CAD has a \textbf{dimension} which can range from 0 (when the cell is a point) to $n$ (when the cell is of full-dimension in $\mathbb{R}^n$).  We describe a subset of cells from a CAD as a \textbf{sub-CAD}.  The sub-CAD consisting of only those cells with full-dimension has been a much studied topic \cite{McCallum1993, Strzebonski2000}.  It can be identified far more efficiently than the CAD itself and is sufficient to solve certain classes of problems.  More generally, we define those cells in a CAD with the same dimension as a \textbf{layer} and an $\bm{\ell}$\textbf{-layered sub-CAD} as a sub-CAD of $\mathbb{R}^n$ consisting of those cells with dimensions $d \in [n-\ell+1, n]$ (i.e. the top $\ell$ layers of the CAD).  We call the CAD consisting of all $n+1$ layers the \textbf{complete CAD}.  See Section  2.2 of \cite{WBDE14} for details including algorithms to produce such sub-CADs both directly and recursively (where additional layers are created one at a time).

In Section \ref{SEC:Distribution} we investigate the spread of cell dimensions in a CAD.  We discover that they conform to a common distribution regardless of the problem studied or algorithm used.  Rather the distribution is a feature of the cylindrical structure and determined mostly by the number of variables.  This means that the size (number of cells) of a CAD may be predicted accurately by the number of full-dimensional cells (which can be computed far quicker).  In Section \ref{SEC:Heuristic} we investigate using this as a heuristic for deciding questions of problem formulation for CAD, showing promising experimental results.  We now continue the introduction with a simple  example to illustrate the ideas so far.

\subsection{Motivating Example}

\begin{example}
\label{ex:intro}
Consider $f = x-y^2$ and $g = x^2-y^2-1$;
the polynomials graphed respectively by the circle and parabola in the first image of Fig.~\ref{fig:MVEx}.  
\end{example}
A CAD is defined implicitly with respect to a variable ordering (defining the projections used for cylindricity).  Assume an ordering $y \succ x$ meaning projections are from $(x,y) \rightarrow x$. 
The second image of Fig.~\ref{fig:MVEx} visualises a sign-invariant CAD for $\{f,g\}$ in this ordering by marking each cell with a black box.  If a box lies at the intersection of two curves, including the dotted lines) then they indicate a cell of dimension 0: just that point.  Otherwise, if the box lies on one of the curves or dotted lines then it indicates a cell of dimension 1: that line segment.  The remaining boxes indicates cells of full dimension: portions of $\mathbb{R}^2$ bounded by the curves or dotted lines.  

In fact this is the minimal sign-invariant CAD for $\{f,g\}$ in the ordering, that is, the one with the fewest number of cells which satisfies the definitions. For example, consider $x \in (-1,0)$.  Then the sign-invariant condition means we must distinguish the five cells indicated (the two portions of the circle and the spaces between, above and below). We could not extend these cells beyond $(-1,0)$ without violating the cylindricity or sign-invariance conditions.  Similar arguments show that the 51 cells indicated are indeed the minimum.

The third image in Fig.~\ref{fig:MVEx} shows only the full dimensional cells (as portions of $\mathbb{R}^2$ coloured differently to neighbouring cells).  There are 17 of these.  The question answered affirmatively in this paper is whether the complexity of the second image (number of black squares) can be predicted accurately by the complexity of the third image (number of different coloured portions).  The experiments in Section \ref{SEC:Distribution} suggest that for a CAD in two variables approximately 0.334 of the cells are full dimensional.  Hence knowing the number of full dimensional cells only, we would correctly predict there to be $17/0.334 = 50.898 \simeq 51$ cells in the complete CAD.

\begin{figure}
\caption{
From top to bottom: graphs of $f=x-y^2$ and $g=x^2+y^2-1$; a sign-invariant CAD for $\{f,g\}$, the full dimensional cells in the CAD.  The latter two use variable ordering $y \succ x$.
}
\label{fig:MVEx}
\includegraphics[width=8.8cm]{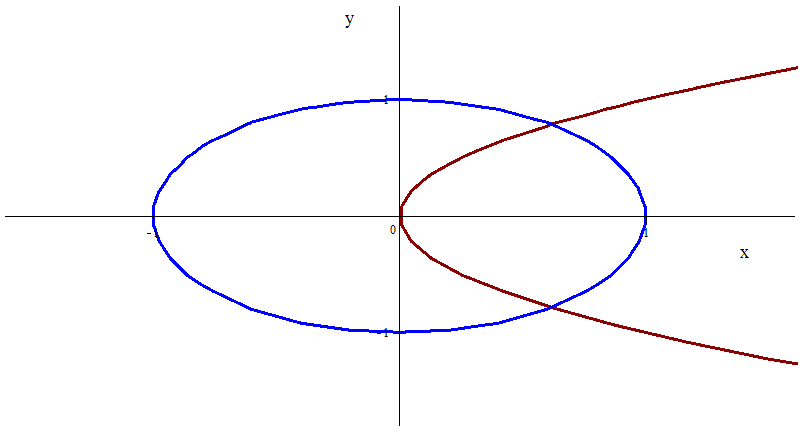}\\
\includegraphics[width=8.8cm]{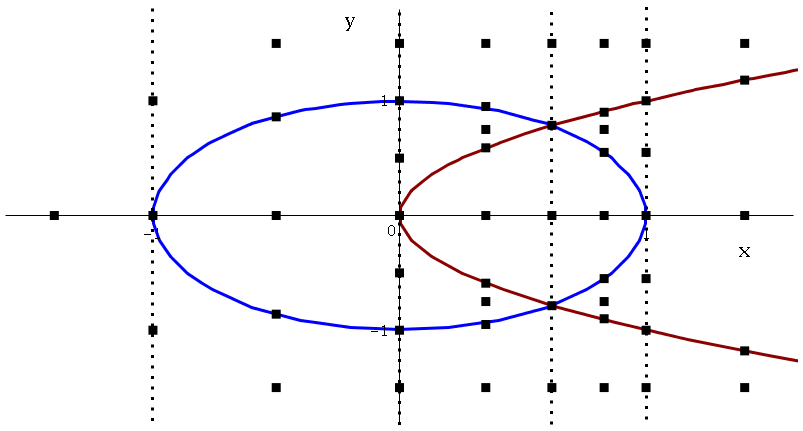}\\
\includegraphics[width=8.8cm]{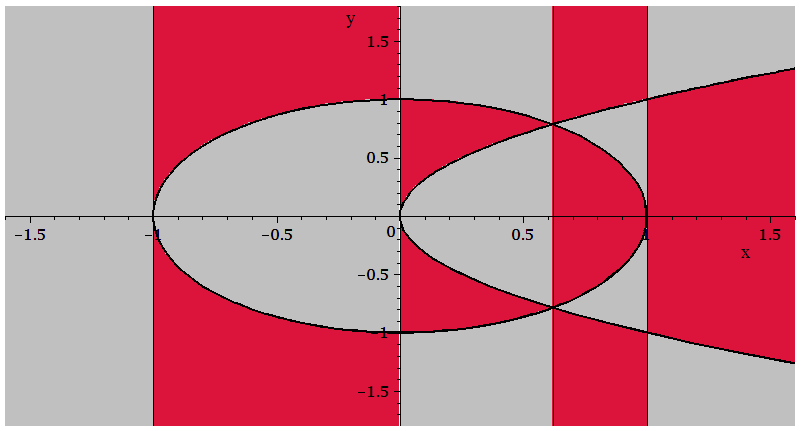}
\end{figure}

\noindent \textbf{Remarks:} We note the following about this example:
\begin{enumerate}
\item Although the full dimensional cells may be sufficient to solve certain problems others will require knowledge of the complete CAD.  For example, the formula $f=0 \land g<0$ is not true on any cell of full dimension.
\item The images above relate to a hypothetical minimal CAD, not necessarily one produced by a known algorithm.  CAD algorithms identify points of intersection by taking resultants.  In this case we have
\[
\mbox{res}_{y}(f,g) = (x^2+x-1)^2
\] 
which has roots at $x = \tfrac{1}{2}(-1 \pm \sqrt{5})$.  The root at $0.618$ identifies the real intersections of $f$ and $g$ while the other at $-1.618$ identifies intersections in $\mathbb{C}^2$ (at points with complex $y$ coordinate). 
Hence, while not required for sign-invariance, known CAD algorithms would split the leftmost cell into three (the line $x=-1.618$ and two full dimensional cells either side).  

The number of full dimensional cells then increases to 18 and the predicted number of total cells becomes $18/0.334 = 53.892 \simeq 54$, one more than the total.
\end{enumerate}

Constructing CADs for the example above is easy with modern technology.  However, for larger problems (particularly those with more variables) CAD can be challenging.  
The cost of computing full-dimensional cells also increases with the size of the problem, but it is much simpler as it avoids computation with algebraic numbers.  Hence, in many situations it is feasible to use the full dimensional cells as a metric to predict the size of the complete CAD, or the feasibility of computing it.  

This approach can be useful for deciding questions of problem formulation for CAD, such as when there is a free or constrained choice over the variable ordering.  For example, when using CAD for QE variables must be projected in the order they are quantified but we can change the ordering of free variables, or variables in blocks of the same quantifier.  The minimal sign-invariant CAD for Example~\ref{ex:intro} with ordering $x \succ y$ has 47 cells, 16 of which are full dimensional as shown in Fig.~\ref{fig:MVEx2}.  If we calculated just these we would predict $16/0.334$
$\simeq 48$ cells in the complete CAD.  Hence using the number of full-dimensional cells as a heuristic would lead us to use the variable ordering producing the smaller complete CAD.

\begin{figure}
\caption{The full dimensional cells in a CAD for $f=x-y^2$ and $g=x^2+y^2-1$, with variable ordering $x \succ y$.}
\label{fig:MVEx2}
\includegraphics[width=8.8cm]{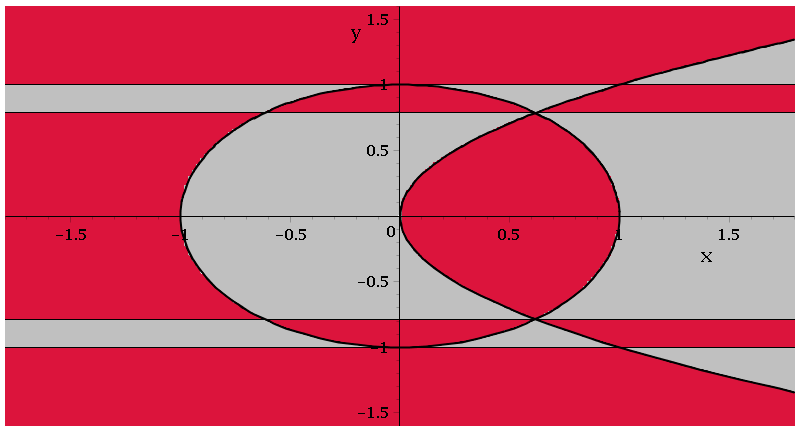}
\end{figure}

There do exist other heuristics for making such choices, which can be cheaper, although they may not be as closely correlated.  See \cite{EBDW14} for a recent summary.  An important example is Brown's heuristic \cite{Brown2004} which uses only simple measures on the input to decide an ordering.  Recent studies show that while it usually gives a good choice, there are classes of problems where it is not successful \cite{HEWDPB14}.  For this example the measures used by this heuristic do not discriminate between $y \succ x$ and $x \succ y$.  
The heuristic \texttt{sotd} \cite{DSS04} goes further by considering the full set of projection polynomials produced by a specific CAD algorithm.  It measures the total degrees of every monomial in every projection polynomial.  For this example, \texttt{sotd} is misled to pick $y \succ x$ (because the resultant of the polynomials in $y$ factors so that as a projection polynomials it has lower \texttt{sotd}, despite the same number of real roots as the resultant in $x$).  

The heuristic \texttt{ndrr} \cite{BDEW13} goes further and counts the size of the induced decomposition of the real line.  It would identify the optimal ordering, but only due to the extra root at $-1.618$ being identified despite not being required for the minimal sign-invariant CAD in $y \succ x$.  
The heuristic based on number of full dimension cells proposed in this paper goes further still by building the full-dimensional cells, but predicts the correct minimal CAD for this problem.

\section{Distribution of cells by dimension}
\label{SEC:Distribution}

\subsection{Distribution for existing problems}

We studied a set of problems from the CAD Example Bank \cite{WBD12_EX}, sourced in turn from the papers  \cite{BH91} and \cite{CMXY09}.  For each problem a sign-invariant CAD was calculated using an implementation of \cite{McCallum1998} in \textsc{Maple} as detailed in \cite{EWBD14}. 
Then for each problem the distribution of cells by dimensions was plotted, as displayed in Fig.~\ref{Formulations:fig:chenBHdimensions}.  
In all such plots (Figures \ref{Formulations:fig:chenBHdimensions}, \ref{Formulations:fig:binomialvsCMXY} and \ref{Formulations:fig:randomCADsandCMXY}) the horizontal axis refers to the cell dimensions and the vertical axis the proportion of the cells in the CAD with those dimensions. 
We see that examples with the same number of variables share similar distributions of cell dimensions: roughly normal but biased towards cells of large dimension.  The closest standard distribution is binomial with a $p$-value $> 0.5$. 

\begin{figure}
\centering
\caption{CAD cell dimension distribution for examples in \cite{WBD12_EX}.  Lines coloured the same relate to problems with the same number of variables (from 2 to 6 going from left to right).
}
\label{Formulations:fig:chenBHdimensions}
\includegraphics[width=0.8\linewidth]{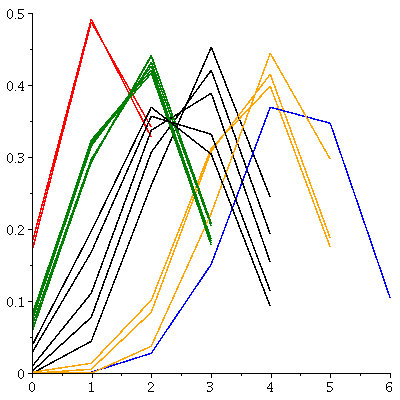}
\end{figure}

\begin{figure}
\caption{Binomial distributions which match the distribution of CAD cells in Figure \ref{Formulations:fig:chenBHdimensions}.  
Calculated from left to right with $(n,p) = [(2,0.6),(3,0.6),(4,0.65),(5,0.7),(6,0.7)]$.}
\label{Formulations:fig:binomialvsCMXY}
\centering
\includegraphics[width=0.8\linewidth]{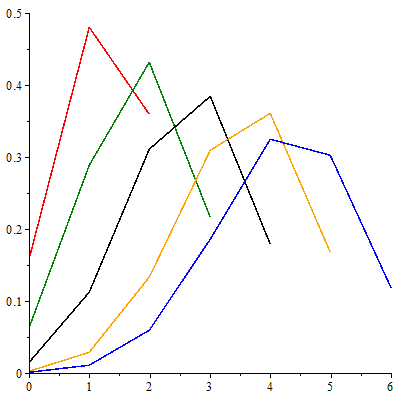}
\end{figure}

Recall that the \textbf{binomial distribution} for $n$ trials with probability $p$ of success is given by:
\begin{equation*}
\mathbb{P}(X = x) := \left\{  
\begin{array}{cc}
\binom{n}{x} p^x (1-p)^{n-x} & 0 \leq x \leq n, \\
0 & {\rm otherwise}.
\end{array}        \right.
\end{equation*}
We determined by eye the $p$-values that best match the examples from \cite{WBD12_EX} and plot these in Fig.~\ref{Formulations:fig:binomialvsCMXY}.  We find that as $n$ increases, the most suitable value of $p$ increases. 

\subsection{Distribution for random cylindrical decompositions}

Our experiments so far suggest the distribution of cell dimensions is largely independent of the individual problems, instead being determined mostly by the number of variables present.  We would like to go further and show they are also independent of the CAD algorithm and implementation used.  

We define \textbf{combinatorially random CADs}.  These are not CADs constructed for randomly generated problems: instead they are hypothetical decompositions of real space made randomly, but with a cylindrical structure.  First a number of variables is chosen at random and then the real line is decomposed into a random number of points and intervals between (a CAD of $\mathbb{R}^1$).  Then each cylinder over a cell is split into a random number of cells: these are assumed sections (zeros of some polynomial) and sectors (the regions in between) but actually here we are constructing just a combinatorial object: a collection of cell indices with no associated projection polynomials. The process is continued until we reach $\mathbb{R}^n$. 

We constructed 45 of these objects in {\sc Maple} using the {\tt rand} command to iteratively build the cell indices (from which cell dimension are easily determined).  Variables were chosen randomly from $\{2,\ldots,6\}$ and cylinders were split using a random number of sections from $\{1,\ldots,7\}$. 
Fig.~\ref{Formulations:subfig:randomCADs} shows the distribution of cell dimensions, and is given alongside the examples from \cite{WBD12_EX} showing the similarity between distributions with the same number of variables.

\subsection{Investigating the combinatorial structure of a CAD}

We now provide some formal justification of the binomial distribution observed in examples. 
In this subsection assume that $\mathcal{D}$ is a CAD or sub-CAD of $\mathbb{R}^n$ and $\mathfrak{D}_i$ the number of cells in $\mathcal{D}$ of dimension $i$ (for $i = 0,\ldots,n$).  We consider the \textbf{induced CADs} of $\mathcal{D}$: the CADs of $\mathbb{R}^j$ ($j = 1, \dots n-1$) formed by projecting $\mathcal{D}$ with respect to the ordering in which it is defined.  The induced CAD of $\mathbb{R}^1$ is a decomposition of the real line into $k_1$ points and $k_1+1$ intervals. 

We make the simplifying assumption that when considering the cylinders over cells from the same induced CAD of $\mathcal{D}$ they are split into the same number of cells.  That is, we assume the cylinder over each cell of $\mathbb{R}^{m}$ consists of $2{k_m}+1$ cells in $\mathbb{R}^{m+1}$ ($k_m$ of which are the same dimension as the base cell).  

\begin{lemma}
\label{Formulations:lem:theoreticalgeneralisation}
Let $\mathcal{D}$ be as described above.  Then 
\begin{equation*}
\mathfrak{D}_{i} = \sum_{\substack{P \subseteq [n] \\ |P| = i}} \left( \prod_{a \in P} (k_a + 1) \prod_{b \in [n] \setminus P} k_b \right),
\end{equation*}
where $[n]$ is the combinatorial shorthand for $\{1,\ldots,n\}$.

In particular we have:
\[
\mathfrak{D}_{0} = \prod_{i=1}^{n} k_i, \qquad 
\mathfrak{D}_{n} = \prod_{i=1}^{n} (k_i + 1).
\]
\end{lemma}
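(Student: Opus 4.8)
The plan is to prove this by induction on $n$, tracking not just $\mathfrak{D}_i$ but the full sequence of cell counts by dimension in each induced CAD. Let $\mathcal{D}^{(m)}$ denote the induced CAD of $\mathbb{R}^m$ and write $\mathfrak{D}^{(m)}_i$ for the number of its cells of dimension $i$; our simplifying assumption says the cylinder over every cell of $\mathcal{D}^{(m)}$ splits into $2k_{m+1}+1$ cells of $\mathcal{D}^{(m+1)}$, of which $k_{m+1}+1$ are sectors (same dimension as the base cell) and $k_{m+1}$ are sections (one dimension higher). The base case $m=1$ is immediate: $\mathcal{D}^{(1)}$ has $k_1$ points (dimension $0$) and $k_1+1$ intervals (dimension $1$), which matches the claimed formula for $n=1$ since the only subsets of $[1]$ are $\emptyset$ (giving $\mathfrak{D}_0 = k_1$) and $\{1\}$ (giving $\mathfrak{D}_1 = k_1+1$).

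For the inductive step, I would fix $m$ and express the dimension-$i$ cell count of $\mathcal{D}^{(m+1)}$ in terms of those of $\mathcal{D}^{(m)}$. A dimension-$i$ cell in $\mathbb{R}^{m+1}$ arises in exactly two ways: as a sector over a dimension-$i$ cell below (contributing $k_{m+1}+1$ such cells per base cell), or as a section over a dimension-$(i-1)$ cell below (contributing $k_{m+1}$ such cells per base cell). Hence
\begin{equation*}
\mathfrak{D}^{(m+1)}_i = (k_{m+1}+1)\,\mathfrak{D}^{(m)}_i + k_{m+1}\,\mathfrak{D}^{(m)}_{i-1},
\end{equation*}
with the convention $\mathfrak{D}^{(m)}_{-1} = \mathfrak{D}^{(m)}_{m+1} = 0$. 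Substituting the inductive hypothesis $\mathfrak{D}^{(m)}_j = \sum_{P \subseteq [m],\,|P|=j}\left(\prod_{a\in P}(k_a+1)\prod_{b\in[m]\setminus P}k_b\right)$ into both terms, the first term collects all subsets $P \subseteq [m+1]$ with $|P|=i$ and $m+1 \notin P$, while the second collects all those with $|P|=i$ and $m+1 \in P$. Their sum is exactly the claimed formula over all $P \subseteq [m+1]$ with $|P|=i$, completing the induction; taking $m+1=n$ gives the result.

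The special cases then fall out by inspection: $\mathfrak{D}_0$ corresponds to the unique subset $P=\emptyset$, giving $\prod_{i=1}^n k_i$, and $\mathfrak{D}_n$ to the unique subset $P=[n]$, giving $\prod_{i=1}^n(k_i+1)$. The only real subtlety — and the one place to be careful — is the bookkeeping in the recurrence: one must verify that every dimension-$i$ cell of $\mathcal{D}^{(m+1)}$ is counted once and only once under the sector/section dichotomy, which relies on the fact that a cell's dimension in $\mathbb{R}^{m+1}$ equals its base cell's dimension plus $0$ (sector) or $1$ (section), and on the simplifying assumption that the split count $2k_{m+1}+1$ is uniform across all base cells of a given induced CAD. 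Once that recurrence is pinned down, the rest is the routine binomial-style partition of the power set of $[m+1]$ according to membership of the last element, which I would not belabour.
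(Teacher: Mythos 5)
Your overall strategy is sound and genuinely different from the paper's: the paper gives a direct one-shot count, characterising a dimension-$i$ cell by the set $P$ of positions of its odd cell indices and multiplying the $k_j+1$ (odd, $j\in P$) and $k_j$ (even, $j\notin P$) choices level by level, whereas you set up a level-to-level recurrence and induct. However, as written your recurrence is wrong, and the error traces back to your statement of the simplifying assumption: you have the roles of sections and sectors reversed. In the paper's setup the $k_{m+1}$ sections (zeros of polynomials, even last index) have the \emph{same} dimension as the base cell, and the $k_{m+1}+1$ sectors (intervals in between, odd last index) have dimension one \emph{higher}. The correct recurrence is therefore
\begin{equation*}
\mathfrak{D}^{(m+1)}_i \;=\; k_{m+1}\,\mathfrak{D}^{(m)}_i \;+\; (k_{m+1}+1)\,\mathfrak{D}^{(m)}_{i-1},
\end{equation*}
i.e.\ the coefficients opposite to the ones you wrote. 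You can see your version cannot be right by checking it against your own target formula: in your first term, $(k_{m+1}+1)\,\mathfrak{D}^{(m)}_i$ attaches the factor $k_{m+1}+1$ to subsets $P\subseteq[m]$ of size $i$, which in the claimed formula for $[m+1]$ corresponds to $m+1\in P$ and hence to subsets of size $i+1$, not $i$. Indeed, your final bookkeeping paragraph (first term $\leftrightarrow$ $m+1\notin P$, second term $\leftrightarrow$ $m+1\in P$) is the correct partition of the power set, but it is the partition induced by the \emph{corrected} recurrence, not the one you stated; the two halves of your argument are inconsistent with each other. You explicitly flagged the recurrence bookkeeping as the one delicate point, and that is precisely where the slip occurs.

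Once the sector/section dichotomy is fixed, your induction does go through: the base case is right, the substitution of the inductive hypothesis splits cleanly over $m+1\in P$ versus $m+1\notin P$, and the special cases $\mathfrak{D}_0=\prod_i k_i$ and $\mathfrak{D}_n=\prod_i(k_i+1)$ follow as you say. For comparison, the paper's direct argument avoids the recurrence entirely and is shorter; your inductive route is essentially multiplying out the generating function $\prod_{i=1}^n\bigl(k_i+(k_i+1)x\bigr)$ one linear factor at a time, which is the content of the paper's subsequent Lemma \ref{Formulations:lem:generatingfunction}, so with the recurrence corrected your proof would in effect establish both lemmas at once.
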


\begin{proof}
The dimension of a cell in $\mathcal{D}$ is equal to the sum of the parity of its cell indices. For a cell to have dimension $i$, it must therefore have $i$ odd indices and $n-i$ even indices.

We can characterise an $i$-dimensional cell by the position of its odd indices. Call the set of these positions $P$. For a fixed $P$ there are many cells associated.  There are a total $k_1+1$ choices of 1-cells if $1 \in P$ and $k_1$ choices of 0-cells if $1 \notin P$. Continuing to build the CAD, at level $j$ there are $k_j + 1$ cell choices if $j \in P$ and $k_j$ choices if $j \notin P$.

Therefore, for a fixed $P$, the number of cells that have an appropriate cell index is given by the product inside the parenthesis.   All that remains is to sum over all possible subsets $P$ of $[n]$ which have cardinality $i$.
\end{proof}

We can see above that in the case where $k_i = k$ for all $i$, then $\{\mathfrak{D}_i\}$ is simply the sequence of binomial coefficients of $(k+(k+1)x)^n$.  In fact, we can see the relationship with the binomial distribution without this extra assumption.

\begin{lemma}
\label{Formulations:lem:generatingfunction}
Let $\mathcal{D}$ be as given in Lemma \ref{Formulations:lem:theoreticalgeneralisation}. Then the generating function for $\mathfrak{D}_i$ is given by:
\begin{equation*}
\prod_{i=1}^n (k_i + (k_i+1)x).
\end{equation*}
That is, $\mathfrak{D}_i$ is the coefficient of $x^i$ in the expansion of the above product.
\end{lemma}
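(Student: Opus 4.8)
The plan is to read this off directly from Lemma~\ref{Formulations:lem:theoreticalgeneralisation}. That lemma already writes $\mathfrak{D}_i = \sum_{|P|=i}\bigl(\prod_{a\in P}(k_a+1)\prod_{b\in[n]\setminus P}k_b\bigr)$, so all that is needed is to recognise the right-hand side as the coefficient of $x^i$ in the stated product. Concretely, I would expand $\prod_{a=1}^n\bigl(k_a+(k_a+1)x\bigr)$ by choosing one summand from each factor. Such a choice is encoded by the set $P\subseteq[n]$ of indices for which the term $(k_a+1)x$ (rather than $k_a$) is selected, and it contributes the monomial $\bigl(\prod_{a\in P}(k_a+1)\bigr)\bigl(\prod_{b\in[n]\setminus P}k_b\bigr)\,x^{|P|}$. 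Grouping these monomials by degree, the coefficient of $x^i$ is exactly $\sum_{|P|=i}\prod_{a\in P}(k_a+1)\prod_{b\in[n]\setminus P}k_b$, which equals $\mathfrak{D}_i$ by Lemma~\ref{Formulations:lem:theoreticalgeneralisation}. The special cases $\mathfrak{D}_0=\prod k_i$ and $\mathfrak{D}_n=\prod(k_i+1)$ reappear here as the constant term and the leading coefficient.

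An equivalent route, closer to the level-by-level construction used in the proof of Lemma~\ref{Formulations:lem:theoreticalgeneralisation}, is induction on $n$. The induced CAD of $\mathbb{R}^1$ has $k_1$ zero-cells and $k_1+1$ one-cells, giving generating function $k_1+(k_1+1)x$; this is the base case. For the inductive step, under the simplifying assumption the cylinder over a $d$-dimensional cell of $\mathbb{R}^m$ splits into $k_{m+1}$ sections (dimension $d$) and $k_{m+1}+1$ sectors (dimension $d+1$). Hence the number of dimension-$i$ cells over $\mathbb{R}^{m+1}$ is $k_{m+1}$ times the number of dimension-$i$ cells over $\mathbb{R}^m$ plus $k_{m+1}+1$ times the number of dimension-$(i-1)$ cells over $\mathbb{R}^m$, and this recurrence is precisely the effect of multiplying the generating function by $k_{m+1}+(k_{m+1}+1)x$.

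Neither argument presents a genuine obstacle: the statement is essentially a bookkeeping reformulation of Lemma~\ref{Formulations:lem:theoreticalgeneralisation}. The only points requiring a little care are the boundary degrees (the notional $x^{-1}$ and $x^{n+1}$ contributions vanish, matching $\mathfrak{D}_i=0$ for $i\notin[0,n]$) and keeping the index $k_j$ attached to the correct level, consistent with the ``at level $j$'' convention in the proof of Lemma~\ref{Formulations:lem:theoreticalgeneralisation}. I would present the direct-expansion version as the main proof and perhaps note the inductive one as an aside.
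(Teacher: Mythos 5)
Your direct-expansion argument is exactly the paper's proof: expand the product, index each choice of summands by the set $P$ of factors contributing $(k_a+1)x$, read off the coefficient of $x^i$, and identify it with $\mathfrak{D}_i$ via Lemma~\ref{Formulations:lem:theoreticalgeneralisation}. The inductive aside is a valid bonus but not needed; the main proof matches the paper.
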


\begin{proof}
Expanding out the product you obtain $x^i$ precisely by choosing $x$ from $i$ different linear factors. This amounts to selecting $i$ integers from the set $[n]$. For a given $P$, each $j \in P$ contributes $k_j+1$ to the coefficient of the generated monomial, and each $j \notin P$ contributes $k_j$. Hence the coefficient of the generated $x^i$ is:
\begin{equation*}
  \prod_{a \in P} (k_a + 1) \prod_{b \in [n] \setminus P} k_b.
\end{equation*}
The coefficient of $x^i$ in the expansion of the product is precisely the summation of all such coefficients:
\begin{equation*}
\sum_{\substack{P \subseteq [n] \\ |P| = i}} \left( 
\prod_{a \in P} (k_a + 1) \prod_{b \in [n] \setminus P} k_b 
\right),
\end{equation*}
which from Lemma \ref{Formulations:lem:theoreticalgeneralisation} is precisely $\mathfrak{D}_i$.
\end{proof}

\begin{figure}
\caption{Comparing the distribution of cell dimensions of combinatorially random CADs with those created for real examples.}
\label{Formulations:fig:randomCADsandCMXY}
\centering
\begin{subfigure}[b]{0.48\columnwidth}
\centering
\includegraphics[width=0.95\linewidth]{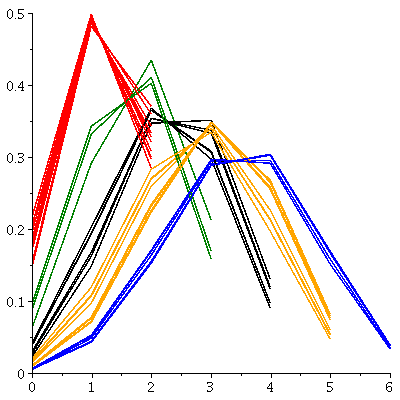}
\caption{Combinatorially random.}
\label{Formulations:subfig:randomCADs}
\end{subfigure}
\begin{subfigure}[b]{0.48\columnwidth}
\centering
\includegraphics[width=0.95\linewidth]{CellDistribution}
\caption{Examples from \cite{WBD12_EX}.}
\label{Formulations:subfig:chentocomparewithrandom}
\end{subfigure}
\end{figure}

\section{Heuristics for CAD problem formulation}
\label{SEC:Heuristic}

As suggested in the introduction, the consistent distribution of cell dimensions can be used for predicting the size of the complete CAD.  Since constructing these cells is far simpler (avoiding all computation with algebraic numbers) we can use it as the basis for heuristics to answer questions of problem formulation.  To experiment with this idea we collated the distributions for our example set \cite{WBD12_EX}, split them by number of variables, and calculated the average proportion of cells that were full-dimensional (shown in Table \ref{Formulation:tab:1layeredexamplebankdist}).  We can then make predictions by comparing the 1-layered CAD for a problem to the average distribution for the number of variables present

\begin{table}[t]
\caption{Average fraction of cells with full-dimensional in CADs for the examples in  \cite{WBD12_EX}.}\label{Formulation:tab:1layeredexamplebankdist}
\centering
\begin{tabular}{ccccc}
Variables & 2 & 3 & 4 & 5 \\
\midrule
Fraction & 0.334 & 0.192 & 0.161 & 0.181
\end{tabular}
\end{table}

\subsection{A new heuristic for choosing a variable ordering}

Example \ref{ex:intro} in the introduction already demonstrated the idea of a heuristic based on full-dimensional cells for picking a variable ordering.  In that example a modest saving was made but more generally the choice of ordering can determine the tractability of a problem.   In \cite{BD07} a class of examples were presented where the ordering changed the complexity from constant to doubly exponential in the number of variables.
The following examples show that a heuristic based on full-dimensional cells works for problems in higher dimensions also, and the potential costs and benefits of using it.

\begin{example}
\label{ex:varord1}
Consider the set of polynomials,
\begin{align*}
F &:= \left\{x^2 + y^2 + z^2 - 1, xy -yz+3, x+y-yz^4\right\}.
\end{align*}
There are six possible orderings of $(x,y,z)$.  We assume all are admissible and seek the one producing the smallest CAD.

For each variable ordering, we create a 1-layered sign-invariant sub-CAD for $F$ using the algorithm in \cite{WBDE14} and a complete sign-invariant CAD for $F$ using the algorithm in \cite{McCallum1998} (both implemented in the \textsc{Maple} package \texttt{ProjectionCAD} \cite{EWBD14}).  The number of cells and computation times for these are recorded, along with the prediction of the number of cells in the full-CAD that would be made by considering the number of full-dimensional cells (multiplying the number of cells in the 1-layered sub-CAD cell count by $\frac{1}{0.192}$).  Table \ref{Formulation:tab:1layeredheuristicExample1} displays these results with the minimal value in each column emboldened.

\begin{table}[b]
\caption{Using 1-layered sub-CADs as a heuristic to pick the variable ordering for a CAD in Example \ref{ex:varord1}.}
\label{Formulation:tab:1layeredheuristicExample1}
\centering
\begin{tabular}{cccccc}
 & \multicolumn{3}{c}{Cells} & \multicolumn{2}{c}{Time} \\
\cmidrule(lr){2-4} 
\cmidrule(lr){5-6}
Order & 1-LCAD & Prediction & CAD & 1-LCAD & CAD \\
\midrule
$x \succ y \succ z$ & {\bf 118} & {\bf 615} & {\bf 539} & {\bf 0.341} & {\bf 2.919} \\
$x \succ z \succ y$ & 160       & 833       & 789       & 0.474       & 5.720       \\
$y \succ x \succ z$ & 340       & 1771      & 1799      & 0.736       & 13.055      \\
$y \succ z \succ x$ & 432       & 2250      & 2211      & 0.950       & 30.638      \\
$z \succ x \succ y$ & 224       & 1167      & 1133      & 0.549       & 10.156      \\
$z \succ y \succ x$ & 392       & 2042      & 2117      & 0.779       & 50.150      \\
\end{tabular}
\end{table}

We see that $x \succ y \succ z$ offers the smallest and quickest 1-layered sub-CAD, and correspondingly, the smallest and quickest complete CAD. The total time for computing all six 1-layered sub-CADs is 
$3.829$ seconds which, along with computing the CAD for $x \succ y \succ z$, means that it would take 
$6.748$ seconds to obtain a CAD using this heuristic.  
This means the heuristic offers a maximum potential saving of 
$43.402$ seconds over computing directly the CAD for $z \succ y \succ x$ and 
1672 cells over the CAD for $y \succ z \succ x$.  
If we had just picked one variable ordering at random then on average our CAD would have taken 
18.773 seconds to compute and have 
1431 cells.  Hence on average the heuristic saves 
892 cells (62\% of the average) and
12.025 seconds (64\% of the average).
\end{example}

\begin{example}
\label{ex:varord2}
Consider the next set of polynomials,
\begin{align*}
\left\{ a^2+b^2+c^2+d^2-1, a^2-4, a-d, b-c, a-c, b-1 \right\},
\end{align*}
in four variables $(a,b,c,d)$ and thus with 24 possible orderings.  
We repeat the experiment detailed in Example \ref{ex:varord2} to produce Table~\ref{Formulation:tab:1layeredheuristicExample2}.

\begin{table}
\caption{Using 1-layered sub-CADs as a heuristic to pick the variable ordering for a CAD in Example \ref{ex:varord2}.}
\label{Formulation:tab:1layeredheuristicExample2}
\centering
\begin{tabular}{cccccc}
 & \multicolumn{3}{c}{Cells} & \multicolumn{2}{c}{Time} \\
\cmidrule(lr){2-4} 
\cmidrule(lr){5-6}
Order                       & 1-LCAD & Prediction & CAD & 1-LCAD & CAD \\
\midrule
$a \succ b \succ c \succ d$ & 7640  & 47453  & 75923  & 10.242 & 387.233 \\
$a \succ b \succ d \succ c$ & 7776  & 48298  & 78187  & 9.212  & 396.658 \\
$a \succ c \succ b \succ d$ & 10644 & 66112  & 106319 & 13.692 & 565.694 \\
$a \succ c \succ d \succ b$ & 11196 & 69540  & 108753 & 13.945 & 557.628 \\
$a \succ d \succ b \succ c$ & 2852  & 17714  & 26903  & 3.477  & 131.893 \\
$a \succ d \succ c \succ b$ & 4340  & 26957  & 41953  & 5.177  & 207.154 \\
$b \succ a \succ c \succ d$ & 6000  & 37267  & 59383  & 9.970  & 316.153 \\
$b \succ a \succ d \succ c$ & 4844  & 30087  & 47879  & 6.755  & 250.658 \\
$b \succ c \succ a \succ d$ & 1946  & 12087  & 18159  & 3.933  & 94.145  \\
$b \succ c \succ d \succ a$ & 1224  & 7602   & 10933  & \textbf{1.676}  & 52.973  \\
$b \succ d \succ a \succ c$ & 1608  & 9988   & 14895  & 2.264  & 73.724  \\
$b \succ d \succ c \succ a$ & 1624  & 10087  & 14595  & 2.174  & 66.825  \\
$c \succ a \succ b \succ d$ & 7684  & 47726  & 72391  & 11.666 & 379.539 \\
$c \succ a \succ d \succ b$ & 6324  & 39279  & 60129  & 8.027  & 308.732 \\
$c \succ b \succ a \succ d$ & 2592  & 16099  & 23705  & 3.509  & 120.485 \\
$c \succ b \succ d \succ a$ & 1404  & 8720   & 12271  & 3.139  & 59.653  \\
$c \succ d \succ a \succ b$ & 3384  & 21018  & 30529  & 4.137  & 142.523 \\
$c \succ d \succ b \succ a$ & 3140  & 19503  & 27545  & 3.811  & 122.740 \\
$d \succ a \succ b \succ c$ & 1184  & 7354   & 10403  & 1.946  & 59.057  \\
$d \succ a \succ c \succ b$ & 1296  & 8050   & 11651  & 1.989  & 62.944  \\
$d \succ b \succ a \succ c$ & 1676  & 10410  & 14927  & 2.628  & 79.229  \\
$d \succ b \succ c \succ a$ & \textbf{1172}  & \textbf{7280}   & \textbf{10213}  & 2.989  & \textbf{51.696}  \\
$d \succ c \succ a \succ b$ & 2364  & 14683  & 21077  & 3.321  & 101.045 \\
$d \succ c \succ b \succ a$ & 1876  & 11652  & 16487  & 2.559  & 76.796  \\
\end{tabular}
\end{table}

The 1-layered sub-CAD cell counts correctly identifies the ordering with the most efficient complete CAD.  However, in this case the timings we would have identified another.  Cell count is likely a more consistent measure for CAD complexity since it avoids many idiosyncrasies of an individual implementation.
The total time to compute all 1-layered sub-CADs for $F$ is 
$132.238$ 
and so using the heuristic to produce a CAD takes 
$183.934$ seconds.
If we picked an ordering at random then on average the CAD would have 
%
27921 cells and take 194.382 seconds.  Hence for this example the heuristic would save a significant number of cells but the time savings would be very modest due to the cost of using the heuristic.

\end{example}

We repeated these experiments on 75 examples each with $\{f_1,f_2,f_3\}$ where $f_i$ are random polynomials in $\{x,y,z\}$ (two quadratic, one linear) generated with {\sc Maple}'s {\tt randpoly} command.
This time we build the 1-layered sub-CADs using the recursive algorithm (Algorithm 4 in \cite{WBDE14}).  This not only constructs a layered sub-CAD, but also a set of unevaluated function calls.  When evaluated these produce both more CAD cells and another set of unevaluated calls.  Combining the new cells with the existing layered sub-CAD will give a sub-CAD with an extra layer (that is, including those cells of one lower dimension).  If we proceed until there are no evaluated calls left then the cells obtained give the complete CAD.
The advantage here is that once an ordering has been selected we do not need to repeat the construction of the full-dimensional cells (giving a further modest saving).  
In the event that more than one ordering had the minimal number of full-dimensional cells we selected the first ordering lexicographically (equivalent to a random choice for these random examples).  

\begin{table}[t]
\caption{Using the number of full dimensional cells to pick the variable ordering for 75 random examples.}
\label{Formulation:tab:recursivelayeredheuristic}
\centering
\begin{tabular}{ccccc}
& \multicolumn{2}{c}{Cells} & \multicolumn{2}{c}{Time} \\
\cmidrule(lr){2-3}
\cmidrule(lr){4-5} 
& Problem Max  & Problem Av & Problem Max & Problem Av \\
\midrule
Average  & $4,719$    & $2,220$ & $64.3$  & $10.0$  \\
Example  & $55.0\%$   & $38.7\%$ & $38.7\%$   & $12.9\%$  \\\midrule
Best    & $12,816$   & $5,204$ & $631.9$ & $143.3$  \\
Example & $93.6\%$   & $84.6\%$ & $84.6\%$   & $70.1\%$ \\\midrule
Worst & $762$      & $297$  & $-44.1$    & $-66.1$   \\
Example      & $13.9\%$   & $6.49\%$  & $-27.9\%$   & $-49.3\%$  \\
\end{tabular}
\end{table}

We found that on average the heuristic saved 38.7\% of the cells and 12.9\% of the computation time for a problem when compared to picking an ordering at random.  However, this masks a lot of variance in the data.  Table \ref{Formulation:tab:recursivelayeredheuristic} gives more details, showing also the examples with the best and worst savings.  We see that even for the worst example the heuristic makes a cell saving, but that for some examples a time saving does not occur (or rather, the saving is outweighed by the cost of the heuristic).  Table \ref{Formulation:tab:recursivelayeredheuristic} compares not only to the average of the different orderings but also to the maximum values, showing the worst cases that can be avoided.
Fig.~\ref{Formulations:fig:recursiveboxplots} shows box plots summarising the 75 examples reinforcing the findings:
\begin{itemize}
\item The cell savings are always positive, meaning the heuristic is an excellent tool for achieving a near minimal CAD.  If the CAD in question is to be computed with extensively in a further application then such cell savings will be of great importance.
\item The time saving can be negative (a cost).  Hence when the aim is to use the heuristic to speed up computations care must be taken.  Further, as the number of variables $n$ increases so too will the time required to compute $n!$ 1-layered sub-CADs and these potential costs.  It is likely that a more appropriate use of the full-dimensional cells from this perspective may be to break ties that result from other (cheaper) measures, or for use when the underlying application (such as quantifier structure) limits the permissible orderings.  
\item When compared to the worst possibility for a problem instead of the problem average of course the savings are greater.  Further, for all but a few outliers the time savings are positive.  Hence the heuristic can be used as a risk-reduction measure.
\end{itemize}

\begin{figure}
\caption{Box plots showing the savings of using the 1-layered sub-CAD heuristic on the 75 random examples.}\label{Formulations:fig:recursiveboxplots}
\centering
\begin{subfigure}[b]{0.48\columnwidth}
\includegraphics[width=0.8\linewidth]{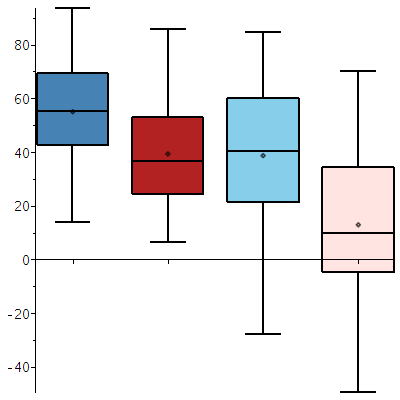}
\caption{Percentage savings in cells.}
\label{Formulations:subfig:recursivecellboxplots}
\end{subfigure}
\begin{subfigure}[b]{0.48\columnwidth}
\includegraphics[width=0.8\linewidth]{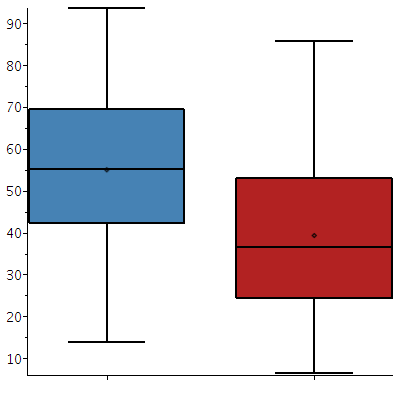}
\caption{Percentage savings in time.}
\label{Formulations:subfig:recursivetimeboxplots}
\end{subfigure}
\end{figure}

\subsection{Algorithms to implement the heuristic}

Algorithm \ref{Formulation:alg:basiclayeredheuristic} demonstrates how the heuristic described above could be implemented efficiently.  We assume a generic CAD input $F$: this was a set of polynomials for the examples above but could more generally be a sequence of formulae say (see \cite{BDEMW13}).  First in step \ref{step:V} we must identify all admissible variable orderings (if there are no restrictions then this is simply all the permutations of the variables defining the polynomials).  Throughout the variable $mc$ stores the minimum number of cells in a computed 1-layered sub-CAD.  The admissible orderings are considered in turn and the full dimensional cells computed.  Here (step \ref{step:1lcad}) an algorithm should be used that is compatible with the required CAD (i.e. same invariance condition).  Ideally it would also be recursive to avoid unnecessary calculation (as described above and in \cite{WBDE14}).  Hence for an ordering $v$ we produce both the 1-layered sub-CAD $L_v$ and a set of unevaluated function calls $U_v$.  If the number of cells computed is a new minimum this is stored.  At the end the variable ordering which contributed the minimal number of full dimensional cells has its layered sub-CAD extended to a complete CAD for the problem (step \ref{step:full}).

\begin{algorithm}
\SetKwInOut{Input}{Input}\SetKwInOut{Output}{Output}
\Input{A CAD input $F$.}
\Output{A CAD for $F$ and a variable ordering $opt$.}
\BlankLine
Set $V$ to be the admissible variable orderings\label{step:V}\;
$mc \leftarrow \infty$\;
\For{$v \in V$}{
    $L_v, U_v \leftarrow {\tt OneLayeredSubCAD}(F,v)$\label{step:1lcad}\;
    \If{$|L_v| < mc$}{
		$mc \leftarrow |L_v|$\;	    
        $opt \leftarrow v$\;
    }
}
$\mathcal{D} \leftarrow {\tt FullCAD}(F,opt,[L_{opt}, U_{opt}])$\label{step:full}\;
\Return $[opt,\mathcal{D}]$\;
\caption{${\tt LayeredHeuristic}$
}\label{Formulation:alg:basiclayeredheuristic}
\end{algorithm}

Algorithm \ref{Formulation:alg:parallellayeredheuristic} describes a (as yet unimplemented) parallel algorithm.  As before we would construct 1-layered sub-CADs for all admissible variable orderings (step 3).  We then await the first to finish and abort the rest (step 6).  We compute the complete CAD for this ordering by evaluating the inert function calls in step 9 (which can also be in parallel).

Algorithm \ref{Formulation:alg:parallellayeredheuristic} chooses the ordering by time taken to compute the full-dimensional cells rather than the number of full-dimensional cells, used by Algorithm \ref{Formulation:alg:basiclayeredheuristic}.  We saw in Example \ref{ex:varord2} that the number of cells can be more closely correlated.  We could modify Algorithm \ref{Formulation:alg:parallellayeredheuristic} to use cells instead by: starting computation of a complete CAD for an ordering only if it had fewer full-dimensional cells than any other computed; if at any point more than two complete CAD computations had been launched, abort the one with fewer full-dimensional cells.

\begin{algorithm}
\SetKwInOut{Input}{Input}\SetKwInOut{Output}{Output}
\Input{A CAD input $F$.}
\Output{A CAD for $F$ in a variable ordering $opt$.}
\BlankLine
Set $V$ to be the admissible variable orderings\;
\ForPar{ {\rm \textbf{each}} $v \in V$\label{step:1lp}}{
     {\rm \textbf{launch}} $L_v, U_v \leftarrow {\tt OneLayCAD}(F,v)$\;
}
Let $v_0$ be the first to finish\;
\ForEach{ $v \in V \setminus \{v_0\}$}{
	{\bf abort} $L_{v}, U_v$\label{step:abort}\;
}
\Repeat{$U_{v_0}$ is empty}{
	\ForPar{ {\rm \textbf{each}} $c \in U_{v_0}$}{
    	 evaluate $c$ and add new cells to $L_{v_0}$ and new unevaluated calls to $U_{v_0}$\;
	}
}
\Return $[v_0,L_{v_0}]$\;
\caption{${\tt Parallel\-Layered\-Heuristic}$
}\label{Formulation:alg:parallellayeredheuristic}
\end{algorithm}

\subsection{Other questions of problem formulation}

We have focused so far on using the new observations on the distribution of CAD cell dimensions as a heuristic for choosing the variable ordering.  However, essentially what we have is a measure of CAD complexity and so we could apply it to other questions of problem formulation for CAD.

\begin{example}
\label{ex:intro2}
Consider again the polynomials $f,g$ 
introduced in Example \ref{ex:intro}.  When considered in Section \ref{SEC:Intro} we built minimal sign-invariant CADs for these polynomials.  However, depending on the underlying application these may provide more information than required.  Suppose that $f$ and $g$ both formed \textbf{equational constraints} (ECs) for the problem: equations whose truth is logically implied by an input formula.  In \cite{Collins1998} a CAD invariant with respect to an EC was defined as a CAD sign-invariant for the polynomial defining an EC and sign-invariant for other polynomials only when that EC is satisfied.  An algorithm to produce such CADs was later presented in \cite{McCallum1999}.  An implementation of this in \texttt{ProjectionCAD} \cite{EWBD14} produced a CAD invariant with respect to $f$ using 21 cells, and one with respect to $g$ using 25 cells.  

The full dimensional cells of these CADs are shown in Fig.~\ref{fig:MVExReturn}.  Using $f$ as the EC creates 8 full-dimensional cells and using $g$ creates 9.  If both are ECs then we can choose to build either of the CADs and a judgement based on the number of full-dimensional cells would lead to the optimal choice.
\end{example}

\begin{figure}
\caption{The image shows CADs built using $f=x-y^2$ and $g=x^2+y^2-1$.  The top image is sign-invariant with respect to $f$ and the second with respect to $g$.  In each case the CAD is also sign-invariant with respect to the other polynomial when the first is zero.}
\label{fig:MVExReturn}
\includegraphics[width=8.8cm]{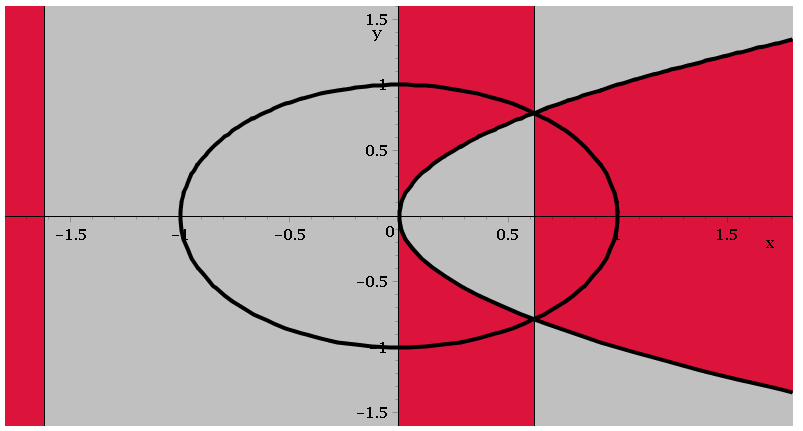}\\
\includegraphics[width=8.8cm]{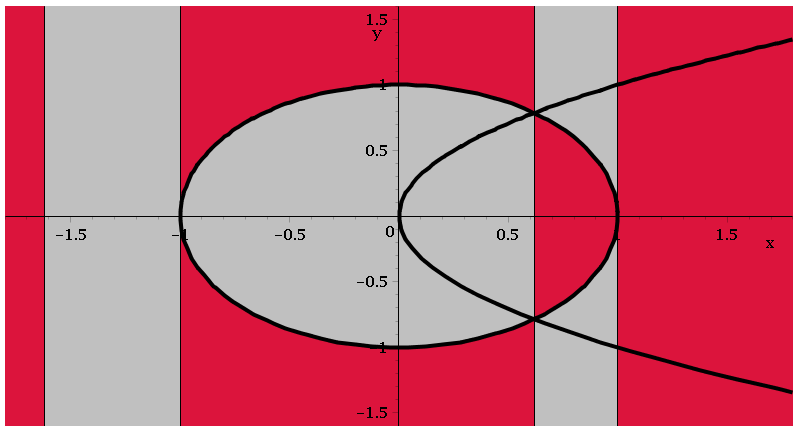}
\end{figure}

\noindent \textbf{Remarks:} 
We note the following about this example:
\begin{enumerate}
\item Of course, when solving a system of equations there are far more efficient techniques to use than CAD.  However, the concept of a CAD with respect to an EC described above is equally applicable to a system of equations and inequalities.
\item Here, both polynomials were ECs but savings were only made from one.  Ideally, we would produce an even simpler CAD where cells are invariant only with the truth of the conjunction of the ECs.  Steps towards this minimal CAD are described in \cite{BCDEMW14} where multiple ECs may be used (following the approach to CAD by regular chains computation originating in \cite{CMXY09}).  In this case there is an analogous questions of problem formulation: the order in which ECs are presented to the algorithm.  This was investigated in \cite{EBCDMW14} where heuristics were developed to help with the problem.  It is likely that the number of full-dimensional cells could also be used to make this choice
\end{enumerate}
 
Other questions of problem formulation for CAD are investigated in \cite{BDEW13} and it is likely that the number of full-dimensional cells could be used as a heuristic for each (with similar caveats on how to use it as outlined above).  We finish by considering a question of problem formulation itself (rather than how a problem is presented to a CAD algorithm).

\begin{example}
\label{ex;piano}
We consider the problem of moving a ladder of length 3 through a right-angled corridor of width 1 (moving from position 1 to position 2 in Fig.~\ref{fig:piano}).  This is an example of a \textbf{piano movers problem} and was first proposed in \cite{Davenport1986}, where it was noted that a CAD could be analysed to give an exact solution.  Of course, a simple analysis shows there is no solution (with it possible only is the ladder is less than $\sqrt{8}$) but we are interested in how this could be decided automatically.

\begin{figure}[ht]
\centering
\begin{tikzpicture}
\draw[ultra thick] (0,0)--(-4,0);
\draw[ultra thick] (0,0)--(0,3);
\draw[ultra thick] (-1,1)--(-4,1);
\draw[ultra thick] (-1,1)--(-1,3);

\draw[orange] (-1.0,0.3)--(-2.8,0.6);
\draw[orange] (-0.6,1.0)--(-0.3,2.8);

\node [below] at (-2.5,0.5) {1};
\node [left] at (-0.5,2.5) {2};

\end{tikzpicture}
\caption{The piano movers problem defined in \cite{Davenport1986}}
\label{fig:piano}
\end{figure}
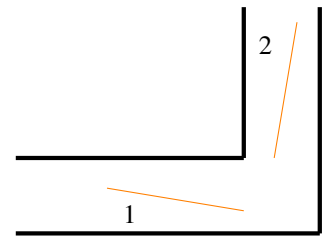

In \cite{Davenport1986} the author described the feasible regions for the ladder, but found that computing a CAD was computationally infeasible, and 28 years later a CAD for this formulation is still very costly.  In \cite{WDEB13} an alternative formulation was proposed: a description of the infeasible region was provided and negated instead.  This simplified the CAD construction considerably.  

We want to identify when such a reformulation of the problem would be beneficial.  In \cite{WDEB13, WBDE14} the process of computing a \textbf{layered variety sub-CAD} is described: as well as restricting to cells of full-dimension we also restrict to cells on a given variety.  The variety was an EC for both formulations (defining the length of the ladder).  The process took around 200 seconds to produce 101,924 cells for the new formulation in \cite{WDEB13}  but timed out for the original one in \cite{Davenport1986}, correctly identifying the tractable formulation for the complete CAD.
\end{example}

\section{Conclusions}

By considering empirical and combinatorial evidence we have shown that the distribution of cells by dimension in a CAD is consistent for a fixed number of variables. This means the size of a CAD can be accurately predicted from the number of full dimensional cells offering new heuristics for CAD problem formulation.  
Such heuristics can be made more efficient by using them with the recursive layered sub-CAD algorithm from \cite{WBDE14} which allows for lazy evaluation of inert computations, avoiding recalculation of results. There is potential for further time savings through parallelism. 

We demonstrated extensively that the number of full-dimensional cells is an effective heuristic for picking the optimal variable ordering for CAD, but that depending on the number of permissible ordering the time savings can be outweighed by the cost of running the heuristic.    
We also demonstrated how the ideas could be used for other questions of problem formulation, with the final example suggesting that the heuristic could be tailored further to the CAD required (using a layered variety sub-CAD when the complete CAD will be invariant with respect to an EC). 

\section*{Acknowledgements}
This work was supported by EPSRC grant: EP/J003247/1.

\bibliographystyle{plain}
\bibliography{CAD}

\end{document}